\newtheorem{theorem}{Theorem}
\newtheorem{definition}{Definition}
\newtheorem{proof}{Proof}
\newtheorem{corollary}{Corollary}
\title{RuleRunner technical report}
\author{Alan Perotti, Guido Boella: University of Turin}
\date{}
\begin{document}
   \maketitle


\section{Introduction} \label{sec:rw}

Runtime verification (RV) of a given correctness property $\phi$ (often formulated in linear temporal logic LTL \cite{ltl}) aims at determining the semantics of $\phi$ while executing the system under scrutiny; a monitor is defined as a device that reads a finite trace and yields a certain verdict \cite{rv}. Runtime verification may work on finite (terminated), finite but continuously expanding, or on prefixes of infinite traces. A monitor may control the current execution of a system (online) or analyse a recorded set of finite executions (offline). 
There are many semantics for finite traces: FLTL \cite{fltl}, RVLTL \cite{rvltl}, LTL3 \cite{ltl3}, LTL$\pm$ \cite{ltl+} just to name some. Since LTL semantics is based on infinite behaviours, the issue is to close the gap between properties specifying infinite behaviours and finite traces. There exist several RV systems, and they can be clustered in three main approaches, based respectively on rewriting, automata and rules.


\section{RuleRunner} \label{sec:rr}
RuleRunner is a rule-based online monitor observing finite but expanding traces and returning an FLTL verdict. As it scans the trace, RuleRunner mantains a state composed by rule names (for reactivating the rules), observations and formulae evaluations.

\begin{algorithm}
	\caption{Preprocessing and Monitoring Cycle}
	\label{alg:monitor}
\begin{algorithmic}[1]
	\State Parse the LTL formula in a tree
	\State Generate evaluation rules, reactivation rules and the initial state
	\While{new observations exist}
	\State Add observations to state
	\State Compute truth values using evaluation rules
	\State Compute next state using reactivation rules
	\If {state contains SUCCESS or FAILURE}
    \State \Return return SUCCESS or FAILURE respectively
	\EndIf
	\EndWhile
\end{algorithmic}
\end{algorithm}

In a nutshell, RuleRunner's behaviour (Algorithm \ref{alg:monitor}) is the following: in the preprocessing phase, RuleRunner encodes an LTL formula in a rule system. The rule system verifies the compliance of a trace w.r.t. the encoded property by entering a monitoring loop, composed by observing a new cell of the trace and computing the truth value of the property in the given cell. If the property is irrevocably satisfied or falsified in the current cell, RuleRunner outputs a binary verdict. If this is not the case, another monitoring iteration is entered, and -like in RuleR- undecided formulae trigger the reactivation of the corresponding monitoring rule. FLTL semantics guarantees that, if the trace ends, the verdict in the last cell of the trace is binary. \\
It is worth stressing how RuleRunner's approach is {\em bottom-up}, forwarding truth values from mere observations to the global property. RuleRunner does not keep a [multi]set of alternatives, as it is rooted in matching the encoding of the formula with the actual observations, computing the unique truth value of every subformula of the property, and carrying along a single state composed of certain information.\\

\begin{definition}
A RuleRunner system is a tuple $\langle R_E, R_R, S\rangle$, where $R_E$ ({\em Evaluation Rules}) and $R_R$ ({\em Reactivation Rules}) are rule sets, and $S$ (for {\em State}) is a set of active rules, observations and truth evaluations.\\
\end{definition}

We will define the rules in more detail in the following subsections; however, due to the lack of space, we will omit some technical details in order to keep the focus on the overall approach and the various components' interaction.\\

\subsection{Evaluation and reactivation rules}

RuleRunner accepts formulae $\phi$ generated by the following grammar:
$$\phi ::= true \mid \ a \mid \ !a \mid \phi \vee \phi \mid \phi \wedge \phi \mid \phi U \phi \mid X\phi \mid W\phi \mid \Diamond \phi \mid \Box \phi \mid END$$
$a$ is treated as an atom and corresponds to a single observation in the trace. We assume, without loss of generality, that temporal formulae are in negation normal form (NNF), i.e. negation operators pushed inwards to propositional literals and cancellations applied. $W$ is the weak negation operator. {\em END} is a special character that is added to the last cell of a trace to mark the end of the input stream.\\

An evaluation rule for $\phi$ is formed from an antecedent (body) and a consequent (head). The antecedent is a conjunction of literals, one of them being the rule name $R[\phi]$, and the others being truth evaluations $[\psi]V$, with $V \in \{T,F,?\}$ and $\psi$ being a subformula of $\phi$. The consequent is a single atom yielding a truth evaluation for $\phi$. Reactivation rules have one single atom as antecedent and a conjunction of atoms as consequent. The left-hand side of a reactivation rule is an undecided truth evaluation, the right-hand side a list of rule names. For example, consider the rules introduced in the previous section:
$$R[\Diamond a], [a]T \rightarrow [\Diamond a]{T}$$
$$R[\Diamond a], [a]F \rightarrow [\Diamond a]{?}$$
are evaluation rules (their only output is a truth evaluation), while
$$[\Diamond a]? \rightarrow R[a], R[\Diamond a]$$
is a reactivation rule, binding truth values in one cell to rule activation in the next cell.
The concept of rule activation is like introduced in RuleR. A rule is active if the rule name $R[\phi]$ is in the state $S$ of the RuleRunner system. For each active rule, if the condition part evaluates to true for the current cell, then the head of the rule is added to the current state. As introduced in Algorithm \ref{alg:monitor}, the verification loop alternately triggers evaluation and reactivation rules: the evaluation rules are used to compute the truth value of the property in the current cell, and the reactivation rules to define what rules are active in the following state. \\	


Each evaluation rule for $\phi$ corresponds to a single cell of the evaluation table for the main operator of $\phi$. Evaluation tables are three-valued truth tables (as introduced by Lukasiewitz and Kleene \cite{three}) with further annotations. \\
The tables in Figure 2 give the example for the $\Diamond$ and $\vee$ operators:\\

\begin{figure}[h!]
	\centering
	\includegraphics[scale=0.65]{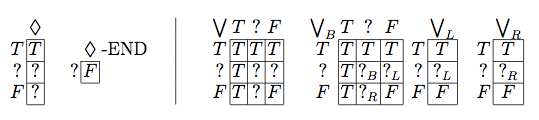}
	\caption{Fig.2: Evaluation tables for $\Diamond$, truth and evaluation tables for $\vee$}
\end{figure}

The three cells on the left define the run-time behavior of $\Diamond \phi$ given the truth value of $\phi$ in the current cell: $\Diamond \phi$ is true if $\phi$ is true, undecided otherwise. The single cell marked with $\Diamond - END$ represents one extra-rule, triggered only if the end of the trace has been reached: in that case, if the truth value of $\Diamond \phi$ is undecided, it is mapped to false. Intuitively, this is done since there is 'no future' left to satisfy $\phi$; moreover, it mirrors the concept of 'forbidden rules' in RuleR.\\


The right-hand side of Figure 2 reports the three-valued truth value table for $\vee$ and the evaluation tables for disjunction. $?_L$,$?_R$ and $?_B$ read, respectively, undecided {\em left}, {\em right}, {\em both}. For example, $?_L$ means that the future evaluation of the formula will depend on the left disjunct only, since the right one failed: the $V_R$ is, in fact, a unary operator. This allows the system to 'ignore' future evaluations of the right disjunct and it avoids the need to 'remember' the fact that the right disjunct failed: since that information is time-relevant (the evaluation failed at a given time, but it may succeed in another trace cell), keeping it in the system state and propagating it through time could cause inconsistencies.\\

The complete set of evaluation tables is reported in Figure 3:


\begin{figure}[h!]
	\includegraphics[scale=0.3]{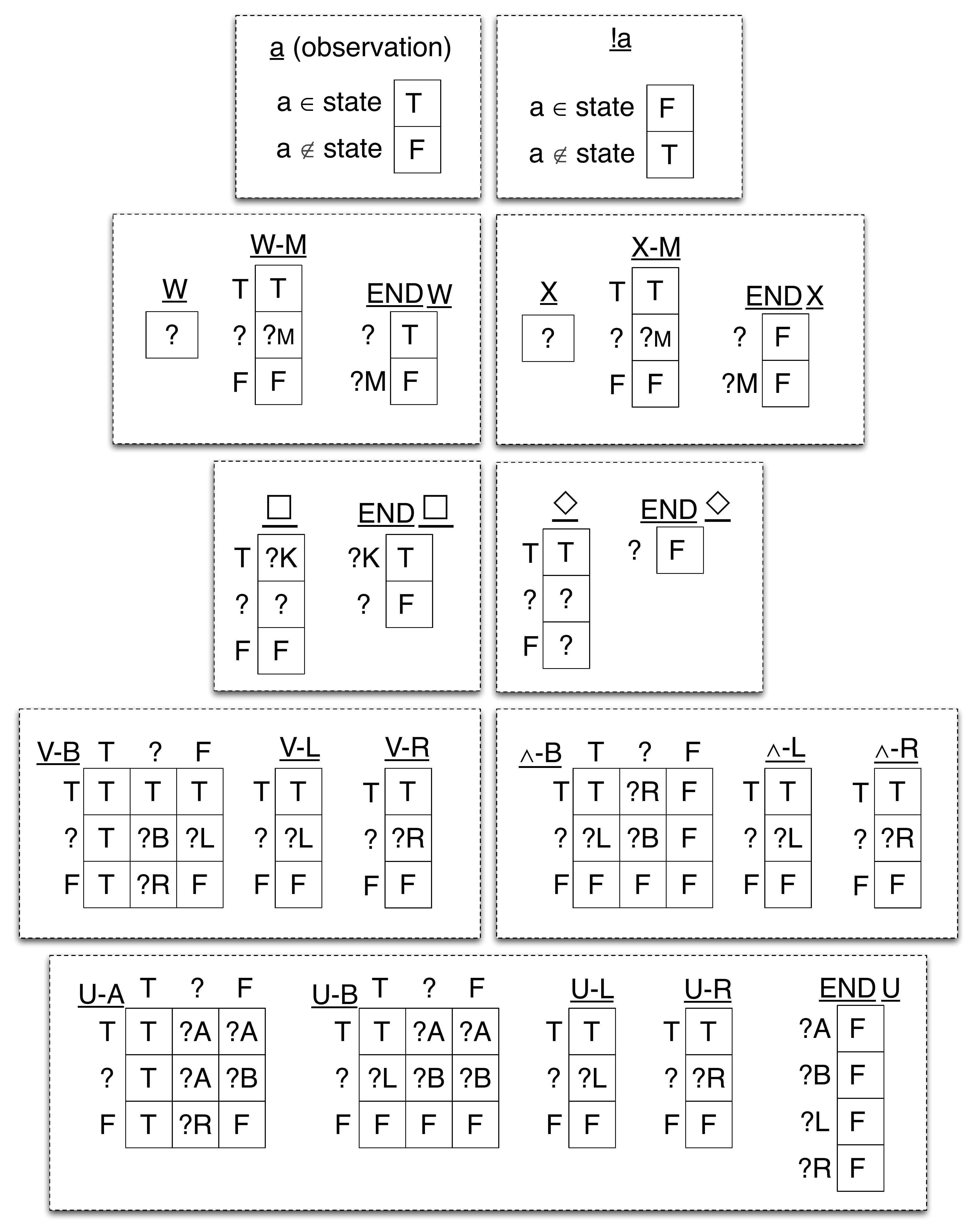}
	\caption{Fig.3: evaluation tables}
\end{figure}

\pagebreak

\begin{algorithm}[h!]
	\caption{Generation of rules}
	{\footnotesize
\begin{algorithmic}[1]
	\Function{Initialise}{$\phi$}
	\State $op \gets$ main operator
	\State
	\Comment{Apply recursively to subformula(e)}
	\If {$op \in \{\Box,\Diamond,X,W\}$}
	\State $\langle R_E^1, R_R^1, S^1 \rangle \gets$ Initialise($\psi^1$)
	\State $R_E \gets R_E^1$; $R_R \gets R_R^1$; 
	\ElsIf {$op \in \{\vee,\wedge,U\}$}
	\State $\langle R_E^1, R_R^1, S^1 \rangle \gets$ Initialise($\psi^1$)
	\State $\langle R_E^2, R_R^2, S^2 \rangle \gets$ Initialise($\psi^2$)
	\State $R_E \gets R_E^1 \cup R_E^2$; $R_R \gets R_R^1 \cup R_R^2$; 
	\Else 
	\State $R_E \gets \emptyset$; $R_R \gets \emptyset$;
	\EndIf
	\State
	\Comment{Compute and add evaluation rules for main operator}
	\State $Cells \gets$op's-evaluation-tables
	\ForAll{cell $\in$ Cells}
	\State Convert cell to single rule $r_e$, substituting formula names
	\State $R_E \gets R_E \cup r_e$
	\EndFor
	\If {$\phi$-is-main-formula}
	\State $R_E \gets R_E \cup ([\phi]T \rightarrow SUCCESS)$
	\State $R_E \gets R_E \cup ([\phi]F \rightarrow FAILURE)$
	\EndIf
	\State
	\Comment{Compute initial state for this subsystem}
	\If {$op = a$} $S \gets R[a]$	
	\ElsIf {$op = !a$} $S \gets R[!a]$
	\ElsIf {$op \in \{\vee,\wedge\}$} $S \gets S^1 \cup S^2 \cup R[\phi]B$
	\ElsIf {$op = U$} $S \gets S^1 \cup S^2 \cup R[\phi]A$
	\ElsIf {$op \in \{\Box,\Diamond\}$} $S \gets S^1 \cup R[\phi]$
	\ElsIf {$op \in \{X,W\}$} $S \gets R[\phi]$
	\EndIf
	\State
	\Comment{Compute and add reactivation rules for main operator}
	\If {$op \in \{\vee,\wedge\}$} $R_R \gets R_R \cup ([\phi]?Z \rightarrow R[\phi]?Z)$, for $Z \in {L,R,B}$
	\ElsIf {$op = U$} $R_R \gets R_R \cup ([\phi]?Z \rightarrow R[\phi]?Z,S^1,S^2)$, for $Z \in {A,B,L,R}$
	\ElsIf {$op \in \{\Box,\Diamond\}$}  $R_R \gets R_R \cup ([\phi]? \rightarrow R[\phi],S^1)$
	\ElsIf {$op \in \{X,W\}$} $R_R \gets R_R \cup ([\phi]? \rightarrow R[\phi]M,S^1) \cup ([\phi]?M \rightarrow R[\phi]M)$
	\EndIf
	\State
	\Comment{Return computed system}
	\State \Return $\langle R_E, R_R, S \rangle$
 	\EndFunction
\end{algorithmic}
}
\end{algorithm}

The generation of evaluation and reactivation rules is summarised in Algorithm 2.
The algorithm visits the parsing tree in post-order. The system is built incrementally, starting from the system returned by the recursive call(s). As introduced in the previous subsection, a RuleRunner system is defined as $\langle R_E, R_R, S \rangle$, the symbols meaning {\em evaluation rules, reactivation rules} and {\em state} respectively. If $\phi$ is an observation (or its negation), an initial system is created, including two evaluation rules, no reactivation rules and the single $R[\phi]$ as initial state. If $\phi$ is a conjunction or disjunction, the two systems of the subformulae are merged, and the conjunction/disjunction evaluation rules, reactivation rule and initial activation are added. The computations are the same if the main operator is $U$, but the reactivation rule will have to reactivate the monitoring of the two subformulae. Formulae with $X$ or $W$ as main operator go through two phases: first the formula is evaluated to undecided, as the truth value can't be computed until the next cell is accessed. Special evaluation rules force the truth value to false (for $X$) or true (for $W$) if no next cell exists. Then, at the next iteration, the reactivation rule triggers the subformula: this means that if $X\phi$ is monitored in cell $i$, $\phi$ is monitored in cell $i+1$. $\phi$ is then monitored independently, and the $X\phi$ (or $W\phi$) rule enters a 'monitoring state' (suffix M in the table), simply mirroring $\phi$ truth value and self-reactivating. Finally, $\Box$ and $\Diamond$ constantly reactivate themselves and their subformula, unless they are (respectively) falsified and verified at runtime, or forced to true or false when the trace ends.

\subsection{Example}
Consider the formula $a \vee \Diamond b$ and the trace $[c - a - b,d - b]$. RuleRunner creates the following rule system:\\
\begin{minipage}[h]{.54\linewidth}
\vspace{0pt}
	{\footnotesize
	EVALUATION RULES
	\begin{itemize}
		\item $R[a]$, $a$ is observed $\rightarrow$ $[a]{T}$
		\item $R[a],$ $a$ is not observed $\rightarrow$ $[a]{F}$
		\item $R[b]$, $b$ is observed $\rightarrow$ $[b]{T}$
		\item $R[b],$ $b$ is not observed $\rightarrow$ $[b]{F}$
		\item $R[\Diamond b]$, $[b]{T}$  $\rightarrow$ $[\Diamond b]{T}$
		\item $R[\Diamond b]$, $[b]{?}$  $\rightarrow$ $[\Diamond b]{?}$
		\item $R[\Diamond b]$, $[b]{F}$  $\rightarrow$ $[\Diamond b]{?}$
		\item $[\Diamond b]?$, $[END]$ $\rightarrow$ $[\Diamond b]{F}$
		\item $R[a\vee\Diamond b]B$, $[a]{T}$, $[\Diamond b]{T}$ $\rightarrow$ $[a\vee\Diamond b]{T}$
		\item $R[a\vee\Diamond b]B$, $[a]{T}$, $[\Diamond b]{?}$ $\rightarrow$ $[a\vee\Diamond b]{T}$
		\item $R[a\vee\Diamond b]B$, $[a]{T}$, $[\Diamond b]{F}$ $\rightarrow$ $[a\vee\Diamond b]{T}$
		\item $R[a\vee\Diamond b]B$, $[a]{?}$, $[\Diamond b]{T}$ $\rightarrow$ $[a\vee\Diamond b]{T}$
		\item $R[a\vee\Diamond b]B$, $[a]{?}$, $[\Diamond b]{?}$ $\rightarrow$ $[a\vee\Diamond b]{?B}$
		\item $R[a\vee\Diamond b]B$, $[a]{?}$, $[\Diamond b]{F}$ $\rightarrow$ $[a\vee\Diamond b]{?L}$
		\item $R[a\vee\Diamond b]B$, $[a]{F}$, $[\Diamond b]{T}$ $\rightarrow$ $[a\vee\Diamond b]{T}$
		\item $R[a\vee\Diamond b]B$, $[a]{F}$, $[\Diamond b]{?}$ $\rightarrow$ $[a\vee\Diamond b]{?R}$
		\item $R[a\vee\Diamond b]B$, $[a]{F}$, $[\Diamond b]{F}$ $\rightarrow$ $[a\vee\Diamond b]{F}$
		\item $R[a\vee\Diamond b]L$, $[a]{T}$ $\rightarrow$ $[a\vee\Diamond b]{T}$
		\item $R[a\vee\Diamond b]L$, $[a]{?}$ $\rightarrow$ $[a\vee\Diamond b]{?L}$
		\item $R[a\vee\Diamond b]L$, $[a]{F}$ $\rightarrow$ $[a\vee\Diamond b]{F}$
		\item $R[a\vee\Diamond b]R$, $[\Diamond b]{T}$ $\rightarrow$ $[a\vee\Diamond b]{T}$
		\item $R[a\vee\Diamond b]R$, $[\Diamond b]{?}$ $\rightarrow$ $[a\vee\Diamond b]{?R}$
		\item $R[a\vee\Diamond b]R$, $[\Diamond b]{F}$ $\rightarrow$ $[a\vee\Diamond b]{F}$
		\item $[a\vee\Diamond b]T$ $\rightarrow$ $SUCCESS$
		\item $[a\vee\Diamond b]F$ $\rightarrow$ $FAILURE$
	\end{itemize}
}
\end{minipage}
\begin{minipage}[h]{.41\linewidth}
	\vspace{0pt}
	{\footnotesize
		REACTIVATION RULES	
		\begin{itemize}
			\item $[\Diamond b]? \rightarrow R[b], R[\Diamond b]$
			\item $[a \vee \Diamond b]?B \rightarrow R[a \vee \Diamond b]B$
			\item $[a \vee \Diamond b]?L \rightarrow R[a \vee \Diamond b]L$
			\item $[a \vee \Diamond b]?R \rightarrow R[a \vee \Diamond b]R$
		\end{itemize}
\hrule	
\vspace*{3mm}	
INITIAL STATE	
$$R[a], R[b], R[\Diamond b], R[a \vee \Diamond b]B$$
\hrule 
\vspace*{3mm}
EVOLUTION OVER $[c - a - b,d - b]$ \\

\begin{tabular}{ |r||l|}
	\hline
state & $R[a], R[b], R[\Diamond b]B, R[a \vee \Diamond b]$ \\ \hline
+ obs & $R[a], R[b], R[\Diamond b]B, R[a \vee \Diamond b], c$ \\ \hline
eval & $[a]F, [b]F, [\Diamond b]?, [a \vee \Diamond b]?R$ \\ \hline
react & $R[b], R[\Diamond b], R[a \vee \Diamond b]R$ \\ \hline
\multicolumn{2}{c}{} \vspace*{-3mm}\\
	\hline
state & $R[b], R[\Diamond b], R[a \vee \Diamond b]R$ \\ \hline
+ obs & $R[b], R[\Diamond b], R[a \vee \Diamond b]R,a$ \\ \hline
eval & $[b]F, [\Diamond b]?, [a \vee \Diamond b]?R$ \\ \hline
react & $R[b], R[\Diamond b], R[a \vee \Diamond b]R$ \\ \hline
\multicolumn{2}{c}{} \vspace*{-3mm}\\
	\hline
state & $R[b], R[\Diamond b], R[a \vee \Diamond b]R$ \\ \hline
+ obs & $R[b], R[\Diamond b], R[a \vee \Diamond b]R,b,d$ \\ \hline
eval & $[b]T, [\Diamond b]T, [a \vee \Diamond b]T, SUCCESS$ \\ \hline
STOP & PROPERTY SATISFIED \\ \hline
\end{tabular}
}
\end{minipage}

\pagebreak
 The behaviour of the runtime monitor is the following: 
\begin{itemize}
	\item In the first cell, $c$ is observed. $a$ is false, $b$ is false, $\Diamond b$ is undecided. The global formula is undecided, but since the trace continues the monitoring goes on.
	\item In the second cell, $a$ has to be ignored (because the property required it to be observed in the previous cell); since $b$ is false again, $\Diamond b$ and $a \vee \Diamond b$ are still undecided
	\item In the third cell, $d$ is ignored but observing $b$ satisfies, in cascade, $b$, $\Diamond b$ and $a \vee \Diamond b$. The monitoring stops, signalling a success. The rest of the trace is ignored.
\end{itemize}
RuleRunner provides rich information about the 'verification status' of a property: in any iteration the state describes which subformulae are under monitoring and what the truth value is; when the monitoring ends, the state itself explains why the property was verified/falsified.

\subsection{Semantics}
RuleRunner implements the FLTL \cite{fltl} semantics; however, there are two main differences in the approach. Firstly, FLTL is based on rewriting judgements, and it has no constraints over the accessed cells, while RuleRunner is forced to complete the evaluation on a cell before accessing the next one. Secondly, FLTL proceeds top-down, decomposing the property and then verifying the observations; RuleRunner propagates truth values bottom up, from observations to the property. In order to show the correspondence between the two formalisms, we introduce the map function:
\begin{center} $map : $ Property $\rightarrow$ FLTL judgement \end{center}
The $map$ function translates the state of a RuleRunner system into a FLTL judgement, analysing the state of the RuleRunner system monitoring $\phi$. Since $\Box$ and $\Diamond$ are derivate operators and they don't belong to FLTL specifications, we omit them from the discussion in this section.\\

\begin{algorithmic}
	\Function{map}{$\phi$, State,index}
\If {$SUCCESS \in $ State} \Return $\top$
\ElsIf {$FAILURE \in $ State} \Return $\perp$
\ElsIf {$[\phi]T \in $ State} \Return $\top$
\ElsIf {$[\phi]F \in $ State} \Return $\perp$
\ElsIf {$[\phi]?S \in $ State} $aux \gets S$
\Else \ find {$R[\phi]S \in $ State}; $aux \gets S$
\EndIf
\If {$\phi = a$}
	\State \Return $[u,index \models a]_{F}$
\ElsIf {$\phi =  \ !a$}
	\State \Return $[u,index \models \neg a]_{F}$
\ElsIf {$\phi = \psi^1 .. \psi^2 \ and \  aux = L$}
\State \Return $map(\psi^1)$
\ElsIf {$\phi = \psi^1 .. \psi^2 \ and \ aux = R$}
\State \Return $map(\psi^2)$	
\ElsIf {$\phi = \psi^1 \vee \psi^2 \ and \ aux = B$}
	\State \Return $map(\psi^1) \sqcup map(\psi^2)$
\ElsIf {$\phi = \psi^1 \wedge \psi^2 \ and \ aux = B$}
	\State \Return $map(\psi^1) \sqcap map(\psi^2)$
\ElsIf {$\phi = \psi^1 U \psi^2 \ and \ aux = A$}
	\State \Return $map(\psi^2) \sqcup (map(\psi^1) \sqcap (map(X(\psi^1 U \psi^2))))$
\ElsIf {$\phi = \psi^1 U \psi^2 \ and \ aux = B$}
	\State \Return $map(\psi^2) \sqcap (map(X(\psi^1 U \psi^2)))$	next
\ElsIf {$\phi = X\psi \ and \ aux \not=M$}
	\State \Return $[u,index \models X\psi]_{F}$
\ElsIf {$\phi = W\psi \ and \ aux \not=M$}
	\State \Return $[u,index \models \bar{X}\psi]_{F}$
\ElsIf {$(\phi = X\psi \ or \ \phi = W\psi) \ and \ aux = M$}
	\State \Return $map(\psi)$	
\EndIf
 \EndFunction
\end{algorithmic}
The following table reports a simple example of an evolution of a RuleRunner step and the corresponding value computed by $map$. Let the property be $a \vee X b$ and the trace be $u = [b-b]$. The index is incremented when the reactivation rules are fired. 
\begin{table}[h!]
\begin{center}
\begin{tabular}{ |l|l|}
\hline 
{\bf State} & {\boldmath$map(a \vee X b)$} \\ \hline
{$R[a],R[Xb],R[a \vee X b]B$} & {$[u,0 \models a]_F \sqcup [u,0 \models Xb]_F$} \\ \hline
{$R[a],R[Xb],R[a \vee X b]B,b$} & {$[u,0 \models a]_F \sqcup [u,0 \models Xb]_F$} \\ \hline
{$R[a],R[Xb],R[a \vee X b]B,b,[a]F$} & {$\perp \sqcup [u,0 \models Xb]_F$} \\ \hline
{$R[a],R[Xb],R[a \vee X b]B,b,[a]F, [b]?M$} & {$\perp \sqcup [u,0 \models Xb]_F$} \\ \hline
{$R[a],R[Xb],R[a \vee X b]B,b,[a]F, [b]?M, [a \vee Xb]?R$} & {$[u,0 \models Xb]_F$} \\ \hline
{$R[b],R[Xb]M,R[a \vee X b]R$} & {$[u,1 \models b]_F $} \\ \hline
{$R[b],R[Xb]M,R[a \vee X b]R, b$} & {$[u,1 \models b]_F $} \\ \hline
{$R[b],R[Xb]M,R[a \vee X b]R, b, [b]T$} & {$\top $} \\ \hline
{$R[b],R[Xb]M,R[a \vee X b]R, b, [b]T, [Xb]T$} & {$\top $} \\ \hline
{$R[b],R[Xb]M,R[a \vee X b]R, b, [b]T, [Xb]T, [a \vee Xb]T$} & {$\top $} \\ \hline
{$SUCCESS$} & {$\top $} \\ \hline
\end{tabular}
\vspace*{2mm}
\caption{Fig.4: The $map$ function}
\end{center}
\end{table}
\vspace*{-5mm}
\begin{theorem} For any well-formed LTL formula $\phi$ over a set of observations, and for every finite trace $u$, for every intermediate state $s_i$ in RuleRunner's evolution over $u$ there exist a valid rewriting $r_j$ of $[u,0 \models \phi]_F$ such that $map(\phi) = r_j$. In other words, RuleRunner's state can always be mapped onto an FLTL judgement over $\phi$.
\end{theorem}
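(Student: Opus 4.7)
The plan is to prove this by nested induction: an outer induction on the number of RuleRunner micro-steps (adding observations, firing an evaluation rule, firing a reactivation rule, advancing the cell index), and, within the induction step, structural induction on the formula $\phi$. The base case is the initial state $s_0$: here I would verify directly from the construction in Algorithm~2 that $map(\phi, s_0, 0)$ produces the judgement $[u,0 \models \phi]_F$, using the fact that every subformula $\psi$ contributes a rule name $R[\psi]$ with the \emph{default} auxiliary mark (B for binary operators, A for $U$, nothing for $X/W/\Diamond/\Box$), which by the definition of $map$ is unfolded exactly into the FLTL rewriting schemata for the corresponding connective.

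For the inductive step I would classify the micro-steps into three families and show each one is either a no-op for $map$ or corresponds to a valid FLTL rewriting. First, adding an observation $o$ to the state never changes the output of $map$, since $map$ only reads the elements $[\psi]V$, $[\psi]?S$, $R[\psi]S$, plus the distinguished \textbf{SUCCESS}/\textbf{FAILURE} tokens. Second, firing an evaluation rule either (i) produces a definite literal $[\psi]T$ or $[\psi]F$, in which case the corresponding FLTL subjudgement in the current $map$ output rewrites to $\top$ or $\perp$ by the FLTL semantics of the involved operator at index $i$, or (ii) produces an undecided $[\psi]?Z$ with a refined suffix $Z \in \{L,R,B,A\}$, which by the definition of $map$ restricts the disjunction/conjunction/unrolling to the relevant disjunct(s) — this is exactly the FLTL rewriting obtained after evaluating and simplifying one side using $\top \sqcup x = \top$, $\perp \sqcup x = x$, and their duals. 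Third, firing a reactivation rule advances the index from $i$ to $i+1$ and replaces the decorated $[\psi]?Z$ tokens by the corresponding $R[\psi]Z$ tokens together with activations of the relevant subformulae; I would verify case by case (matching the entries of Algorithm~2) that $map$ before and after reactivation yields the same FLTL term modulo the $X$-shift clause $[u,i \models X\psi]_F \rightsquigarrow [u,i+1 \models \psi]_F$, which is precisely the FLTL rule for advancing time.

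The structural induction then verifies each connective separately. Propositional atoms and their negations are immediate from the two evaluation rules and the semantics of $[u,i \models a]_F$. For $\vee$ and $\wedge$, I would use the evaluation tables in Figure~3 to check that the $B \to L,R$ refinements mirror the algebraic simplifications of $\sqcup$ and $\sqcap$. For $U$ I would check the two FLTL unfolding identities encoded by the $A$ and $B$ marks: $A$ corresponds to the full unrolling $\psi^2 \sqcup (\psi^1 \sqcap X(\psi^1 U \psi^2))$, while $B$ corresponds to the residual $\psi^2 \sqcap X(\psi^1 U \psi^2)$ arising when $\psi^1$ has just been confirmed; the $L$ and $R$ refinements then collapse this to a single branch, again matching FLTL algebraic simplification. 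For $X$ and $W$ I would separately handle the two phases encoded by the $M$ suffix: before $M$, $map$ emits the raw FLTL judgement $[u,i \models X\psi]_F$ (resp.\ $\bar X$), and after reactivation to $M$ it emits $map(\psi)$ at the incremented index, matching FLTL's $X$-elimination.

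The main obstacle, and the step I would spend the most care on, is the $U$ case, because two logically distinct intermediate states both carry the mark $B$ (the residual after one unrolling) versus $A$ (the freshly reactivated unroll), and it must be shown that the reactivation rule $[\phi]?Z \to R[\phi]?Z, S^1, S^2$ (which, note, reactivates \emph{both} subsystems) yields exactly the FLTL term obtained by advancing the index in the residual. A secondary subtlety is the end-of-trace behaviour: the $END$-triggered rules convert lingering $[\Diamond\phi]?$ to $[\Diamond\phi]F$ and the analogous rules for $X,W,U$ force the final verdict, and I would verify these agree with the FLTL boundary clauses $[u,|u| \models \psi]_F$. Once all cases are checked, composing the micro-step invariants along any RuleRunner trajectory yields a chain of valid FLTL rewritings from $[u,0 \models \phi]_F$ to $map(\phi, s_i, \cdot)$, which is the required $r_j$.
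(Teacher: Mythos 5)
Your proposal is correct and follows essentially the same route as the paper: a case analysis over the connectives in which each evaluation-table entry is matched to an algebraic simplification of the FLTL unfolding ($\top \sqcup x = \top$, $\perp \sqcup x = x$, and duals), the $?_L/?_R/?_B$ suffixes are read as restrictions to the surviving branch, and reactivation is matched to the index-advancing $X$-elimination step. The only difference is organizational --- you make explicit an outer induction on the micro-steps of the evolution (with observation-addition as a $map$-no-op) where the paper folds that trajectory argument implicitly into each structural case --- which is a slightly more rigorous packaging of the same argument rather than a different one.
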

\begin{proof} The proof proceeds by induction on $\phi$:
	\begin{itemize}
		\item {\boldmath$\phi = a$}\\ If the formula is a simple observation, then the initial state is $R[a]$, and $map(R[a]) = [u,0 \models a]_{F}$. Adding observation to the state does not change the resulting FLTL judgement. If $a$ is observed, RuleRunner will add $[a]T$ to the state, and this will be mapped to $\top$. If $a$ is not observed, RuleRunner will add $[a]F$ to the state, and this will be mapped to $\perp$. So for this simple case, the evolution of RuleRunner's state corresponds either to the rewriting $[u,0 \models a]_{F} = \top$ (if $a$ is observed) or to the rewriting $[u,0 \models a]_{F} = \perp$ (if $a$ is not observed).
		\item {\boldmath$\phi = !a$}\\
		This case is analogous tho the previous one, with opposite verdicts.
		\item {\boldmath $\phi = \psi^1 \vee \psi^2$} \\		
		By inductive hypothesis, a RuleRunner system monitoring $\psi^1$ always corresponds to a rewriting of $[u,i\models \psi^1]$. The same holds for $\psi^2$. Let $\langle R_R^i, R_E^i, S^i \rangle$ be RuleRunner system monitoring the subformula $\psi^1$, with $i \in  \{ 1,2 \}$. A RuleRunner system encoding $\phi$ includes $R^1$ and $R^2$ rules and specific rules for $\psi^1 \vee \psi^2$ given the truth values of $\psi^1$ and $\psi^2$. The initial state is therefore $R[\psi^1 \vee \psi^2] \cup S^1 \cup S^2$, and this is mapped to $map(S^1) \sqcup map(S^2)$. By inductive hypothesis, this is a valid FLTL judgement. In each iteration, as long as the truth value of $\psi^1 \vee \psi^2$ is not computed, the state is mapped on $map(S^1) \sqcup map(S^2)$. When the propagation of truth values reaches $\psi^1 \vee \psi^2$, the assigned truth value mirrors the evaluation table for the disjunction. If either $\psi^1$ or $\psi^2$ is true, then $\phi$ is true, and $map(\phi) = \top$. This corresponds to the valid rewriting $map(S^1) \sqcup map(S^2) = \top$, given that we are considering the case in which there is a true $\psi^i$: $[\psi^i]T$ belongs to the state and $map(\psi^1) = \top$. The false-false case is analogous. In the $?_B$ case, the mapping is preserved, and this is justified by the fact that both $\psi^1$ and $\psi^2$ are undecided in the current cell, therefore $map(\psi^i) \not = \top,\perp$, therefore $map(\psi^1) \sqcup map(\psi^2)$ could not be simplified. In the $?_L$ case, we have that $[\psi^2]F$, therefore $map(\psi^2) = \perp$. The FLTL rewriting is $map(\psi^1) \sqcup map(\psi^2) = map(\psi^1)$, and this is a valid rewriting since $map(\psi^1) \sqcup map(\psi^2) = map(\psi^1) \sqcup \perp = map(\psi^1)$. The $?_R$ case is symmetrical.
	\item {\boldmath $\phi = \psi^1 \wedge \psi^2$} \\	
	Same as above, with the evaluation table for conjunction on the RuleRunner side and the $\sqcap$ operator on the FLTL judgement side.
	\item {\boldmath $\phi = X\psi$} \\
	A RuleRunner system encoding $X\phi$ has initial state $R[X\phi]$, which is mapped on $[u,0 \models X\psi]_F$. Then, if the current cell is the last one, $R[X\phi]$ evaluates to $[X\phi]F$, and the corresponding FLTL judgement is $\perp$. If another cell exists, $R[X\phi]$ evaluates to $[X\phi]?$ (with the same mapping). When the reactivation rules are triggered, $[X\phi]?$ is substituted by $R[X\psi]M,R[\psi]$. Over this state, $map(X\psi) = map (\psi)$, and the index is incremented since reactivation rules were fired. Therefore, the FLTL rewriting is $[u,i \models X\psi] = [u,i+1 \models \psi]$, and this is a valid rewriting.
	\item {\boldmath $\phi = W\psi$} \\
	This case is like the previous, but if the current cell is the last then $R[W\psi]$ evolves to $[W\psi]T$; the mapping is rewritten from $[u,i \models W\psi]$ to $\top$, and this is a valid rewriting if there is no next cell.
	\item {\boldmath $\phi = \psi^1 U \psi^2$} \\
	The initial RuleRunner system includes rules for $\psi^1$, $\psi^2$ and for the $U$ operator. As long as $R[\psi^1 U \psi^2]A$ is not evalued, $map(\psi^1 U \psi^2) = map(\psi^2) \sqcup (map(\psi^1) \sqcap (map(X(\psi^1 U \psi^2))))$, that is, the standard one-step unfolding of the 'until' operator as defined in FLTL. When a truth value for the global property is computed, there are several possibilities. The first one is that $\psi^2$ is true and $\psi^1 U \psi^2$ is immediately satisfied. RuleRunner adds $[\psi^1 U \psi^2]T$ to the state and $map(\phi) = \top$; this corresponds to the rewriting $map(\psi^2) \sqcup (map(\psi^1) \sqcap (map(X(\psi^1 U \psi^2)))) = \top \sqcup (map(\psi^1) \sqcap (map(X(\psi^1 U \psi^2)))) = \top$, which is a valid rewriting. The case for $[\psi^1]F$ and $[\psi^2]F$ is analogous. The $?_A$ case means that the evaluation for the until is undecided in the current trace, and is mapped on the standard one-step unfolding of the until operator in FLTL. The $?_B$ case implicitly encode the information that 'the until cannot be trivially satisfied anymore', and henceforth the FLTL mapping is $map(\psi^1) \sqcap (map(X(\psi^1 U \psi^2)))$. The cases for $?_L$ and $?_R$ have the exact meaning they had in the disjunction and conjunction cases. For instance, if $[\psi^1]F$ and $[\psi^2]?$, RuleRunner adds $[\psi^1 U \psi^2]?R$ to the state, and for the obtained state $map(\phi) = map(\psi^2)$. The sequence of FLTL rewriting is $map(\psi^2) \sqcup (map(\psi^1) \sqcap (map(X(\psi^1 U \psi^2)))) = map(\psi^2) \sqcup (\perp \sqcap (map(X(\psi^1 U \psi^2)))) = map(\psi^2) \sqcup \perp = map(\psi^2)$.
	\end{itemize}
\end{proof}
\begin{corollary}
	RuleRunner yields a FLTL verdict.
\end{corollary}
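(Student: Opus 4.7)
The plan is to derive the corollary as a direct consequence of the theorem above, combined with the termination guarantee provided by the END-processing rules in RuleRunner. First I would note that the theorem establishes an invariant: at every intermediate state $s_i$ reached while scanning $u$, the value $map(\phi)$ computed on $s_i$ coincides with some valid rewriting of the FLTL judgement $[u,0 \models \phi]_F$. In particular, this invariant must hold at the final state produced by RuleRunner.

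Next, I would argue that RuleRunner always reaches a terminal state on a finite trace. The monitoring loop exits precisely when SUCCESS or FAILURE appears in the state (Algorithm~\ref{alg:monitor}). When the END marker is appended to the last cell, the special END-triggered evaluation rules (such as $[\Diamond\psi]?, [END] \rightarrow [\Diamond\psi]F$, the symmetric rule forcing $[W\psi]?$ to $T$, the rule forcing $[X\psi]?$ to $F$, and the analogous rules for $U$) guarantee that every leaf of the parse tree receives a definite verdict in $\{T,F\}$. The bottom-up propagation through the evaluation tables for $\vee$, $\wedge$ and $U$ then forces either $[\phi]T$ or $[\phi]F$ to appear in the outermost rule, and by the top-level rules $[\phi]T \rightarrow SUCCESS$ and $[\phi]F \rightarrow FAILURE$ the loop terminates.

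Finally, by the definition of $map$, if the terminal state contains SUCCESS then $map(\phi) = \top$, and if it contains FAILURE then $map(\phi) = \bot$. Combining this with the theorem's invariant, the sequence of values produced by $map$ along RuleRunner's execution is a legal rewriting chain starting from $[u,0 \models \phi]_F$ and ending in either $\top$ or $\bot$. Such a terminal value is, by definition, an FLTL verdict for $\phi$ over $u$, proving the corollary.

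The only subtle step is the termination argument, and the main obstacle to making it fully rigorous is checking that the END rules are exhaustive for every operator admitted by the grammar of Section~\ref{sec:rr}, so that no branch of the parse tree can escape the last iteration carrying an undecided truth value. This reduces to a finite case-inspection of the evaluation tables in Figure~3, which I would discharge operator by operator.
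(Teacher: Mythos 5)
Your proposal is correct and follows essentially the same route as the paper's own proof: it combines the theorem's invariant (every state maps to a valid rewriting of $[u,0 \models \phi]_F$) with the observation that $map$ sends a SUCCESS/FAILURE state to $\top$/$\bot$, and attributes termination on a finite trace to the end-of-trace rules. Your version merely spells out the END-rule case analysis that the paper leaves implicit (and note it is the undecided temporal operators, not the leaves, that those rules resolve).
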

\begin{proof}
	RuleRunner is always in a state that can be mapped on a valid FLTL judgement; therefore, when a binary truth evaluation for the encoded formula is given, this is mapped on the correct binary evaluation in FLTL. But since for such trivial case the $map$ function corresponds to an identity, the RuleRunner evaluation is a valid FLTL judgement. The fact that RuleRunner yields a binary verdict is guaranteed provided that the analysed trace is finite, thanks to end-of-trace rules.
\end{proof}


\subsection{Complexity}
RuleRunner generates several rules for each operator, but this number is constant, as it corresponds to the size of evaluation tables plus special rules (like the SUCCESS one). The number of rules corresponding to $\phi \vee \psi$, for instance, does not depend in any way on the nature of $\phi$ or $\psi$, as only the final truth evaluation of the two subformulae is taken into account. The preprocessing phase creates the parse tree of the property to encode and adds a constant number of rules for each node (subformula). Then, during the runtime verification, for each cell of the trace the system goes through all rules exactly once. This is guaranteed by the fact that the rules are added in a precise order, assuring pre-emption for rules evaluating simpler formulae. This is simply implemented by the post-order visit of the parsing tree, as shown in Algorithm (2). 
This strict ordering among formulae guarantees that, when the set of rules regarding (i.e.) $\phi U \psi$ is considered, both $\phi$ and $\psi$ have been evaluated already, and their evaluations belong to the current state of the system.\\
Therefore, the complexity of the system is inherently polynomial. This complexity is not in contrast with known exponential lower bounds for the temporal logic validity problem, as RuleRunner deals with the satisfiability of a property on a trace, thus tacking a different problem from the validity one. This kind of distinction is also mentioned in \cite{trover}. In general, the exponential nature of many approaches (e.g., Büchi Automata) arises from listing all possible combinations of observations before matching them with the actual ones. We avoid this by computing a single, distributed state, containing only the certain (and therefore single) truth value of every subformula, computed after the observation phase.\\


\section{Prototyping} \label{sec:code}
A Java implementation is available at www.di.unito.it/$\sim$perotti/RV13.jnlp.\\ The prototype requires the user to enter a well-formed LTL formula, which is parsed before unlocking the actual verification settings. Traces can be either manually typed in the GUI, randomly generated (after setting some parameters) or loaded from files. The output is binary for the second and third case, and verbose for the first one.


\bibliographystyle{splncs}
\bibliography{ruleRunnerArXiv}
%
%


\end{document}